\newtheorem{p}{Proposition}
\begin{document}

\title{Proactive Location-Based Scheduling of Delay-Constrained Traffic Over Fading Channels}

\author{\thanks{This publication was made possible by NPRP grant $\#$~NPRP $7-923-2-344$ from the Qatar National Research Fund (a member of Qatar Foundations). The statements made herein are solely the responsibility of the authors.
}\IEEEauthorblockN{Antonious M. Girgis\IEEEauthorrefmark{2},
        Amr El-Keyi\IEEEauthorrefmark{2}, Mohammed Nafie\IEEEauthorrefmark{2}\IEEEauthorrefmark{3} and
        Ramy Gohary\IEEEauthorrefmark{1}
        }
    \IEEEauthorblockA{\IEEEauthorrefmark{2}Wireless Intelligent Networks Center (WINC), Nile University, Cairo, Egypt\\
        \IEEEauthorrefmark{3}EECE Dept., Faculty of Engineering, Cairo University, Giza, Egypt\\
        \IEEEauthorrefmark{1}Department of Systems and Computer Engineering, Carleton University, Ottawa, ON, Canada\\
        Email: \{a.mamdouh@nu.edu.eg, \{aelkeyi, mnafie\}@nileuniversity.edu.eg, gohary@sce.carleton.ca\}
    }
}

\maketitle

\begin{abstract}

In this paper, proactive resource allocation based on user location for point-to-point communication over fading channels is introduced, whereby the source must transmit a packet when the user requests it within a deadline of a single time slot. We introduce a prediction model in which the source predicts the request  arrival $T_p$ slots ahead, where $T_p$ denotes the prediction window (PW) size. The source allocates energy to transmit some bits proactively for each time slot of the PW with the objective of reducing the transmission energy over the non-predictive case. The requests are predicted based on the user location utilizing the prior statistics about the user requests at each location. We also assume that the prediction is not perfect. We propose proactive scheduling policies to minimize the expected energy consumption required to transmit the requested packets under two different assumptions on the channel state information at the source. In the first scenario, offline scheduling, we assume the channel states are known a-priori at the source at the beginning of the PW. In the second scenario, online scheduling, it is assumed that the source has causal knowledge of the channel state. Numerical results are presented showing the gains achieved by using proactive scheduling policies compared with classical (reactive) networks. Simulation results also show that increasing the PW size leads to a significant reduction in the consumed transmission energy even with imperfect prediction.

\end{abstract}

\begin{IEEEkeywords}
Energy efficiency, resource allocation, hard deadline, dynamic programming, predictive networks
\end{IEEEkeywords}
\IEEEpeerreviewmaketitle

\section{Introduction}

The increasing number of wireless devices, e.g., smart phones, tablet computes, that are accessing wireless networks is leading to rapid evolution of the traffic load. In this context, wireless networks should be enhanced to support this increasing load under limited resources while satisfying the desired quality of service~\cite{misc}. One of the critical resources is the transmission energy, where minimizing energy consumption reduces the cost of downlink transmission and extends the battery life of wireless devices in uplink transmission. In this work, our objective is minimizing the transmission energy by utilizing the predictability of human behavior.

There are various studies, e.g., Song et al.~\cite{song2010limits} and Jensen~\cite{jensen2010estimating}, showing that user behavior is highly and precisely predictable. Recently, utilizing the predictability of user behavior has received considerable attention in many applications for communication networks. El Gamal in~\cite{el2010proactive} showed that communication systems can use lower bandwidth to achieve a required outage probability via predicting incoming requests. In~\cite{gungor2011proactive}, the authors  proposed a proactive source coding scheme for dynamic content that minimizes the communication cost, where the source non-causally knows the transmission cost before starting the transmission. In~\cite{tadrous2014can},~\cite{alotaibi2015game}, it has been shown that proactive data download can achieve a win-win situation, in which the users minimize their payments and the carrier maximizes its profit. De Mari in~\cite{de2014energy} introduced energy efficient scheduling of delay constrained transmission by assuming several scenarios of the future channel state knowledge.

 As a result of the regularity of user's trajectory on weekdays, future locations of the user usually can be predicted with high accuracy. In~\cite{song2010limits},~\cite{jensen2010estimating}, the mobile phone is used as a sensor to collect information about user behavior, enabling the network to predict future locations of the user. In~\cite{burbey2011predicting}, experiments for predicting future locations with high accuracy was introduced based on a Markov model. Therefore, predictive networks can track transitions in the user location, and  predict the future locations of the user. Predicting the user location can be very useful in predicting its requested traffic. For example, an application can run in the background of the smart devices to collect data about user requests from each location and submit this data to the network as proposed in~\cite{el2010proactive}. Hence, predictive networks can predict the user requests in future slots based on the user location at the current time slot (TS).

Optimal scheduling of delay-constrained transmission was investigated in~\cite{lee2009energy}, whereby the source has $T$ slots to transmit a packet of $B$ bits. These traditional networks are deemed \textit{reactive}, where the request is served after it is initiated by the user, and hence, the source is limited by the deadline of $T$ slots. Let us assume that the network can \textit{perfectly} predict the user request $T_p$ slots in advance, where $T_p$ is the prediction window (PW) size. Hence, the source can transmit a packet of $B$ bits within $T_p+T$ slots instead of $T$ slots only which can lead to minimizing the transmission energy.

In this paper, we develop a proactive scheduling policy that consumes an amount of energy lower than that consumed in reactive networks by proactively  transmitting some bits from the packet during the PW, i.e., before the user requests it. We assume that the deadline of the request is a single TS, i.e., $T=1$, and the prediction of the request arrival is \textbf{inaccurate} through the PW. Therefore, at some instants, the network might incorrectly predict the request arrival and transmit some bits proactively which leads to wasting the transmission energy. We propose a smart proactive scheduler that estimates the probability of the request arrival using the current user location and the prior statistics.  The proposed proactive scheduler minimizes transmission energy by selecting the number of bits transmitted proactively at each TS of the PW based on the estimated probability of requesting the packet. We propose two different scheduling polices, namely offline scheduling and online scheduling, where the prediction in both polices is not perfect. Under offline scheduling policy, the network is assumed to have non-causal knowledge of the channel states. This is not a practical assumption but we use it to gauge the performance bounds of our proposed algorithms. While under the online policy, the network is assumed to causally know the channel state information at the beginning of each TS. We compare the performance of predictive networks with reactive networks using numerical simulations to show the gain achieved by using the proposed proactive scheduling strategies.

The remainder of the paper is organized as follows. In Section~\ref{sys}, we define the system model and formulate the optimization problem. In Section~\ref{off}, we introduce the proactive scheduling strategy for the offline scheduling scenario. Section~\ref{online} presents the optimal scheduler for the online scheduling scenario. In Section~\ref{results}, we present our numerical results. Finally, we conclude the paper in Section~\ref{conclusion}.

\section{SYSTEM MODEL}
\label{sys}

Consider a source communicating with a single user in a time-slotted wireless network. The user can make a request for a packet of size $B$ bits at the beginning of any TS and each request has a deadline of a single TS. Therefore for reactive networks, when the user requests the packet at a certain TS, the source must transmit $B$ bits before the end of this TS. In predictive networks, the source has the capability to predict the request arrival $T_p$ slots ahead, where $T_p$ represents the PW size and $T_p \in \mathbb{N_{+}}$. Without loss of generality, we focus on the request initiated by the user at TS $t=1$, where $t$ denotes the index of slots in the time duration $T_p+1$ that consists of the PW time slots and the deadline, i.e., $t \in \lbrace T_p+1,\cdots,1\rbrace$, where the index of time $t$ is in descending order (similar to~\cite{lee2009energy}). Let $\mathcal{T}_p=\left\{T_p+1,\cdots,2\right\}$ denote the set of slots belonging to the PW and $t=1$ is the actual arrival time of the predicted request as shown in Fig.~\ref{fig1}. The binary  parameter $I$ is an indicator to the request arrival at TS $t=1$. It is defined as
\begin{equation}
I=\left\{\begin{array}{l l}
1 & \text{if the user requests the packet at TS $t=1$}\\
0 & \text{otherwise}.
\end{array}\right.
\end{equation}
We assume that the prediction is inaccurate. Hence, the source does not know the exact value of the indicator $I$ at each TS $t\in\mathcal{T}_p$. Thus, during the PW, the indicator $I$ is a Bernoulli random variable with parameter $p_t$ that denotes the probability of the request arrival at the last TS $t=1$.

We assume that the source is transmitting with the channel capacity under unit
variance white Gaussian Noise. Therefore, the number of bits transmitted proactively at TS $t$ is given by
\begin{equation}
b_t=WT\log_2\left(1+\frac{h_t E\left(b_t,h_t\right)}{WT}\right)
\end{equation}
where $E\left(b_t,h_t\right)$ is the amount of energy used for transmission at time slot $t$, and $h_{t}$ denotes the channel state at TS $t$. The channel states
$h_{t}$, $t=T_p+1,\cdots,1$, are assumed independent and identically distributed(i.i.d) random variables varying from one TS to another according to a known continuous distribution. Furthermore, $W$ and $T$ denote the channel bandwidth and the TS duration respectively. For simplicity, let $WT$ equal unity so that the energy consumed to transmit $b_t$ at TS $t$ is given by
\begin{equation} \label{eqn2}
E\left(b_t,h_t\right)=\frac{2^{b_t}-1}{h_t}.
\end{equation} Moreover, let $\beta_t$ denote the remaining bits of the packet at beginning of TS $t$, i.e., after transmitting some bits in earlier time slots, where $\beta_{T_p+1}=B$ and $\beta_t$ is given by
\begin{equation}
\begin{aligned}
\beta_t=B-\sum_{i=t+1}^{T_p+1}b_i=\beta_{t+1}-b_{t+1}.
\end{aligned}
\end{equation}

\begin{figure}[!t]
\centering
\includegraphics[width=8 cm, height=2.5 cm]{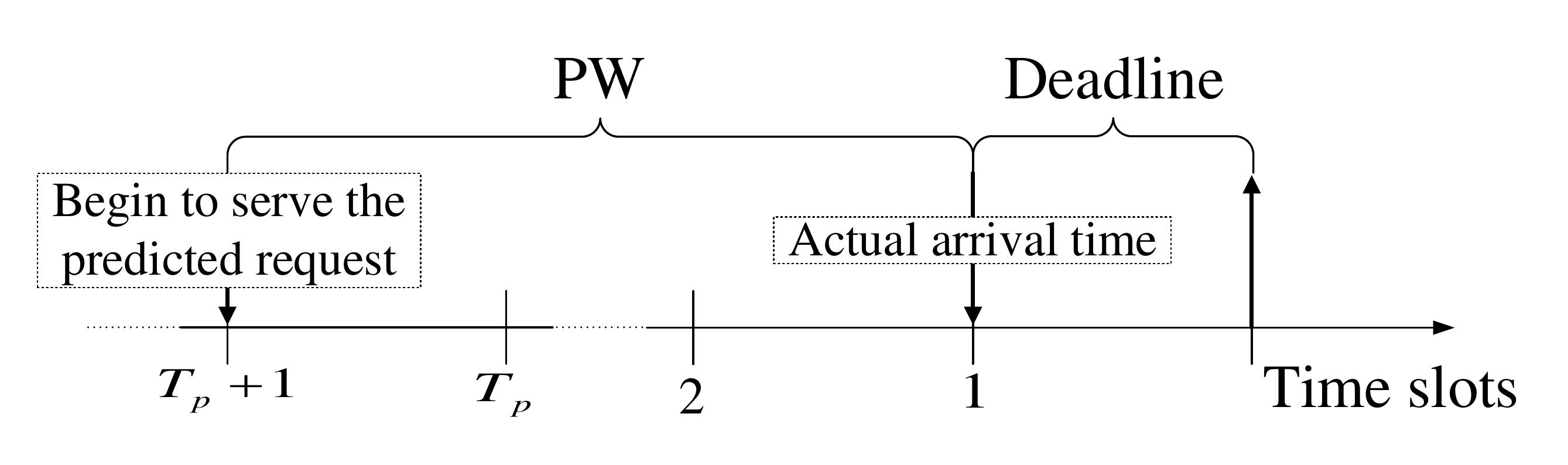}
\caption{Predictive Networks}
\label{fig1}
\end{figure}

\subsection{Problem formulation}

Our objective is to get an energy-efficient scheduler that minimizes the consumed energy  to deliver the packet before the end of the deadline upon request. The scheduler transmits proactively some bits during the PW before the user requests the packet at TS $t=1$. Hence, the source estimates the mean value of the indicator $I$ through the PW to determine the number of bits transmitted at each TS $t\in\mathcal{T}_p$\footnote{If we assume that the user requests are predicted perfectly, then the scheduling problem is equivalent to scheduling $B$ bits in $T_p+1$ time slots as in~\cite{lee2009energy}.}. Accordingly, the proactive scheduling policy can be obtained by solving the following optimization problem
\begin{subequations}\label{optimzation}
\begin{align}
\min_{b_{T_p+1},\cdots,b_{1}}&\mathbb{E}\left\{\sum_{t=2}^{T_P+1}E\left(b_t,h_t\right)+E\left(b_{1},h_{1}\right) I\right\} \label{obj} \\
\text{subject to} & \sum^{T_p+1}_{t=1}b_t=B \label{c1}\\
 & b_{t}\geq 0 \quad \forall \ t\in \lbrace 1,\cdots,T_p+1\rbrace
\end{align}
\end{subequations}
where $\mathbb{E}$ denotes the expectation operator.  The cost function~\eqref{obj} represents the total expected energy of transmission. It consists of two terms. The first term represents the total consumed energy during the PW to transmit some bits proactively, and the second term represents the consumed energy during the deadline to complete the transmission of the packet if and only if the user requests it at TS $t=1$, i.e., if $I=1$. The constraint~\eqref{c1} assures that no violation of the deadline occurs.

At each TS of the PW, $t\in \mathcal{T}_p$, there is a tradeoff between proactively transmitting more bits to minimize the energy consumption if the request arrives at $t=1$, and reducing the proactive transmission to reduce the wasted energy if the request does not arrive at $t=1$. Hence, the allocated bits $b_t$, $t\in\mathcal{T}_p$, depend heavily on the expected value of the indicator $I$ calculated at TS $t$, i.e., $ p_t$. In the last TS $t=1$, the remaining bits $b_1$ are transmitted using the energy $E\left(b_1,h_1\right)$ if and only if $I=1$.

In reactive networks, i.e., when $T_p=0$, if the user requests the packet at TS $t=1$ the source has a single TS to transmit it. Hence, the consumed energy $E_{R}$ is given by
\begin{equation}
E_{R}=\frac{2^{B}-1}{h_1}.
\end{equation}
\subsection{Dynamic estimation of the probability of request arrival}
\label{estimate}
Since the probability of the request arrival, $p_t$, is significantly correlated with the user location. We develop a dynamic estimate of $p_t$ through the PW based on the current user location. Let $X\left( t\right)$ denote the user location at TS $t$ that takes a value from the set $\textbf{S}$, where $\textbf{S}$ is a finite set of $k$ locations, $\lbrace l_1,l_2,\cdots,l_k\rbrace$, containing all possible locations of the user. We assume that  $X \left(T_p+1\right),\cdots,X \left(1\right)$ is a stationary Markov process~\cite{burbey2011predicting} where $\textbf{S}$ represents the state space of the process and the $k\times k$ transition probability matrix is denoted by $\mathbf{L}$ where the $(i,j)$th element of $\mathbf{L}$ is given by $L_{i,j}\!=\!\text{Pr}\lbrace \left.X\left(t\right)=l_j\right|X\left(t+1\right)=l_i\rbrace$. Note that the transition probabilities can be estimated by observing the mobility pattern of the user over a sufficiently long time interval. Let $g_i$ denote the probability that the user requests the packet given that the user is located at location $l_i$, i.e., $g_i=\text{Pr}\lbrace \left. I=1\right|X\left(1\right)=l_i\rbrace$. Thus, the row vector $\textbf{g}=\left[g_1,\cdots,g_k\right]$ denotes the prior statistics vector. Note that each element $g_i$ of the statistics vector $\textbf{g}$ can be computed by counting the number of times that the user requests the packet from the location $l_i$.

Once the source observes $X\left(t\right)$, $t\in\mathcal{T}_p$, the source can estimate the probability of request arrival at TS $t=1$  as follows
\begin{equation} \label{eqn1}
\begin{aligned}
p_t&=\text{Pr}\lbrace \left.I=1\right|X\left(t\right)\rbrace\\
&=\sum_{i=1}^{k} \text{Pr}\lbrace \left.X\left(1\right)=l_i\right|X\left(t\right)\rbrace\ g_i\\
&=\pmb{\pi}_t\ \textbf{L}^{t-1}\ \textbf{g}^{T}\\
\end{aligned}
\end{equation}
where $\pmb{\pi}_t=\left[\pi_1,\cdots,\pi_k\right]$ is the
observation row vector whose $j\text{-th}$ element $\pi_j=1$ if
$X\left(t\right)=l_j$ and the other elements $\pi_i=0
\quad \forall\ i\neq j$. The elements of the matrix
$\textbf{L}^{t-1}$ are the conditional probabilities of the user
location at TS $t=1$ given his location at $t$, and
$(\cdot)^T$ denotes the transpose operator. Note that the
probability of the request arrival $\lbrace
p_t\rbrace_{t=2}^{T_p+1}$ changes from one TS to another during
the PW due to the change in the user location.

In the next sections, we propose an energy-efficient scheduler
that can minimize the expected energy of transmission compared to
reactive networks.

\section{OFFLINE SCHEDULING} \label{off}
Here, we assume that the source
non-causally knows the channel states $\lbrace h_{t}\rbrace^{T_p+1}_{1}$ at the first TS of the PW, i.e., the source has the ability to predict the future channel states~\cite{duel2007fading}. Although this assumption is non-realistic, the results of this section are useful in providing an upper bound on the performance of causal scheduling algorithms presented in Section~\ref{online}. Moreover, the offline scheduling is used to get suboptimal solution for online scheduling when $T_p>1$ as we will discuss later in Section~\ref{online1}. First, we get a closed-form expression of the optimal scheduler when $T_p=1$. Next, we discuss the general case for $T_p>1$.

\subsection{Proactive scheduling policy for $T_p=1$}
\label{offtp1} At TS $t=2$, the source transmits $b_2$ bits based on its information about $h_1$, $h_2$ and $p_2$, where $p_2$ is obtained from equation~\eqref{eqn1}. At TS $t=1$,
the source takes one of two actions: a) If the user requests the
packet, i.e., the prediction is correct, the remaining bits of
the packet $b_1=B-b_2$ are transmitted, b) In the case of erroneous
prediction, i.e., the user does not request the packet, the source
will not transmit $b_1$, and the source loses the amount of energy consumed at TS $2$.
\begin{p}
\label{Proposition_1}
The optimum number of bits for proactive transmission when $T_p=1$
is given by
\begin{equation}\label{P1}
\begin{aligned}
b_{2}\ =&\ \left<  \frac{B}{2} + \frac{1}{2}\log_{2}\left(\frac{h_{2}}{\overline{h_1}}\right) \right>_{0}^{B}\\
\end{aligned}
\end{equation}
where  $\overline{h_1}\ =\ \frac{h_1}{p_2}$  and $\left<.\right>_{0}^{B}$ denotes the truncation from below at $0$ and from above at $B$.
\end{p}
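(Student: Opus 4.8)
The plan is to convert the stochastic program~\eqref{optimzation} into a deterministic convex program in the single scalar variable $b_2$, and then obtain the solution by projecting the unconstrained stationary point onto the feasible interval. First I would exploit the offline assumption: at TS $t=2$ the channel gains $h_1$ and $h_2$ are already known, so the only randomness in the objective~\eqref{obj} is the indicator $I$, which is Bernoulli with mean $p_2$. Taking the expectation over $I$ and substituting~\eqref{eqn2}, the cost becomes
\begin{equation*}
\frac{2^{b_2}-1}{h_2}\;+\;p_2\,\frac{2^{b_1}-1}{h_1}.
\end{equation*}
Then I would use the deadline constraint~\eqref{c1} with $T_p=1$, i.e.\ $b_1=B-b_2$, to eliminate $b_1$ and reduce the problem to: minimize
\begin{equation*}
f(b_2)\;=\;\frac{2^{b_2}-1}{h_2}\;+\;\frac{p_2}{h_1}\bigl(2^{\,B-b_2}-1\bigr)
\end{equation*}
over $b_2\in[0,B]$, where $b_1\ge 0$ has become $b_2\le B$ and $b_2\ge 0$ is retained.

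The second step is to establish strict convexity. Since $b_2\mapsto 2^{b_2}$ and $b_2\mapsto 2^{\,B-b_2}$ are convex and the weights $1/h_2$ and $p_2/h_1$ are nonnegative, $f$ is convex on $[0,B]$; indeed $f''(b_2)=(\ln 2)^2\bigl(2^{b_2}/h_2+(p_2/h_1)\,2^{\,B-b_2}\bigr)>0$ whenever $p_2>0$, so $f$ is strictly convex with a unique global minimizer on $\mathbb{R}$. Setting $f'(b_2)=0$ gives $2^{b_2}/h_2=(p_2/h_1)\,2^{\,B-b_2}$, hence $2^{2b_2}=(h_2/h_1)\,p_2\,2^{B}$; taking $\log_2$ and writing $\overline{h_1}=h_1/p_2$ yields the unconstrained optimizer $\tfrac{B}{2}+\tfrac12\log_2(h_2/\overline{h_1})$.

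Finally, because $f$ is convex and the feasible set is the interval $[0,B]$, the constrained minimizer is the Euclidean projection of the unconstrained minimizer onto $[0,B]$, which is precisely the truncation operator $\langle\cdot\rangle_0^B$. This can be argued directly — if the stationary point lies outside $[0,B]$, then $f$ is monotone on $[0,B]$ and the minimum is attained at the nearest endpoint — or equivalently via the KKT conditions for the two bound constraints. Combining these steps gives~\eqref{P1}. I do not expect a serious obstacle here; the points requiring care are verifying strict convexity so the minimizer is well defined, making the ``projection equals truncation'' step rigorous rather than merely asserting it, and treating the degenerate case $p_2=0$, in which $\overline{h_1}\to\infty$ and the truncation correctly forces $b_2=0$ — consistent with transmitting nothing proactively when the request never arrives.
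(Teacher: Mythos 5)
Your proposal matches the paper's own proof in the Appendix essentially step for step: take the expectation over the Bernoulli indicator $I$, substitute $b_1=B-b_2$ to get the single-variable objective $f(b_2)$, verify convexity via the second derivative, set the first derivative to zero to obtain $\tfrac{B}{2}+\tfrac12\log_2(h_2/\overline{h_1})$, and enforce $0\le b_2\le B$ by truncation. Your added care about projection-equals-truncation and the degenerate case $p_2=0$ only makes the argument slightly more complete than the paper's.
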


\begin{proof}
See Appendix
\end{proof}
We can see from~\eqref{P1} that the parameter $p_2$ has significant effect on determining the number of bits allocated to TS $2$. Decreasing $p_2$, i.e., decreasing the probability that the packet would be requested at $t=1$, leads to increasing the effective channel gain $\overline{h_1}$ which reduces the number of bits transmitted proactively ($b_2$). In other words $p_2\rightarrow0$ leads to $\overline{h_1}\rightarrow \infty$ and $b_2\rightarrow0$. On the other hand, if $p_2=1$, i.e., the source knows certainly that $I=1$, the problem reduces to the scheduling problem introduced in~\cite{lee2009energy}. In this case, when $h_2=\overline{h_1}$, the packet is divided equally between the two time slots. However, when $h_2>\overline{h_1}$ more than half the packet is transmitted proactively due to favorable channel state at TS $t=2$. We note that the source does not transmit some bits from the packet proactively if $h_2 < 2^{-B}\overline{h_1}$. Hence, the channel state $h_2$ must be greater than $2^{-B}\overline{h_1}$ to allow proactive scheduling.

\subsection{Proactive scheduling policy for $T_p>1$}
\label{offtp} In this case the source has $T_p+1$ time slots to
serve the predicted request. Therefore, the source allocates a
number of bits $b_t$ to be transmitted at TS $t$ 
according to the current channel state $h_t$ compared to the
future channel states $\lbrace h_i\rbrace_{i=1}^{t-1}$ and the
expected value of the indicator $I$ at this TS, i.e., $p_t$.
\begin{p}
The optimal scheduling policy for $T_p>1$ is given by
\begin{equation} \label{eqn4}
\begin{aligned}
&b_t=\left<\log_2\left(\frac{h_t}{\varepsilon^t_{\text{th}}}\right)\right>^{\beta_t}_{0}\\
\text{where}\quad&\textbf{H}_t=\lbrace h_{t},h_{t-1},\cdots,\overline{h_1}\rbrace, \quad\overline{h_1}=\frac{h_1}{p_t}\\
&\bar{\textbf{H}}_t=\lbrace \left.h\right|h \in \textbf{H}_t , h > \varepsilon^t_{\text{th}}\rbrace\\
&N=\vert\bar{\textbf{H}}_t\vert\\
&\varepsilon^t_{\text{th}}=2^{-\frac{\beta_t}{N}}\mathbb{G}
\left(\bar{\textbf{H}}_t\right).
\end{aligned}
\end{equation}
\end{p}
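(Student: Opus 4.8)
The plan is to recognize the problem solved at time slot $t$ as a convex resource-allocation program and to characterize its minimizer by the KKT conditions, exactly as in the $T_p=1$ case of Proposition~\ref{Proposition_1} but now over the $t$ remaining slots. At TS $t$ the source must split the $\beta_t$ remaining bits among slots $t,t-1,\dots,1$; substituting $E(b,h)=(2^{b}-1)/h$ and replacing the random term $E(b_1,h_1)I$ by its conditional mean $p_t(2^{b_1}-1)/h_1=(2^{b_1}-1)/\overline{h_1}$ with $\overline{h_1}=h_1/p_t$, the subproblem becomes
\[
\min\ \sum_{h\in\textbf{H}_t}\frac{2^{b(h)}-1}{h}
\quad\text{s.t.}\quad
\sum_{h\in\textbf{H}_t} b(h)=\beta_t,\ b(h)\ge 0 .
\]
First I would note that the objective is a positive combination of the convex functions $2^{b}$, the feasible set is a polytope with nonempty relative interior, so Slater's condition holds, the KKT conditions are necessary and sufficient, and the minimizer is unique.

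Next I would form the Lagrangian with a multiplier $\lambda$ for the sum constraint and multipliers $\mu(h)\ge 0$ for the nonnegativity constraints. Stationarity gives $(\ln 2)\,2^{b(h)}/h=\lambda+\mu(h)$. For a used slot, $b(h)>0\Rightarrow\mu(h)=0$, hence $b(h)=\log_2\!\big(h/\varepsilon^t_{\text{th}}\big)$ with the ``water level'' $\varepsilon^t_{\text{th}}:=(\ln 2)/\lambda$; for an unused slot, $b(h)=0$ forces $\mu(h)=(\ln 2)/h-\lambda\ge 0$, i.e. $h\le\varepsilon^t_{\text{th}}$. Thus a slot is active precisely when $h>\varepsilon^t_{\text{th}}$, which is the definition of $\bar{\textbf{H}}_t$, and $N=\vert\bar{\textbf{H}}_t\vert$. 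Since every feasible $b(h)$ is automatically at most $\beta_t$, the allocation can be written as $b(h)=\big\langle\log_2(h/\varepsilon^t_{\text{th}})\big\rangle_0^{\beta_t}$, and in particular $b_t$ is as claimed. To pin down $\varepsilon^t_{\text{th}}$ I would impose the sum constraint on the active slots, $\beta_t=\sum_{h\in\bar{\textbf{H}}_t}\log_2(h/\varepsilon^t_{\text{th}})=-N\log_2\varepsilon^t_{\text{th}}+\log_2\!\prod_{h\in\bar{\textbf{H}}_t}h$, which rearranges to $\varepsilon^t_{\text{th}}=2^{-\beta_t/N}\big(\prod_{h\in\bar{\textbf{H}}_t}h\big)^{1/N}=2^{-\beta_t/N}\,\mathbb{G}(\bar{\textbf{H}}_t)$, the geometric mean of the active gains, matching the statement.

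The hard part will be resolving the apparent circularity: the active set $\bar{\textbf{H}}_t$ determines $\varepsilon^t_{\text{th}}$ through the geometric-mean formula, while $\varepsilon^t_{\text{th}}$ in turn determines which gains lie in $\bar{\textbf{H}}_t$. To close this I would sort the gains in $\textbf{H}_t$ in decreasing order and, for each $n=1,\dots,t$, let $\varepsilon^{(n)}$ be the level obtained from the formula when the $n$ largest gains are declared active; I would then show there is exactly one $n$ for which this choice is self-consistent — the $n$-th largest gain exceeds $\varepsilon^{(n)}$ while the $(n{+}1)$-th does not — and that this $n$ equals $N$. This monotonicity argument is the standard multi-level water-filling step; once it is in place, convexity together with sufficiency of the KKT conditions guarantees that the resulting $b_t$ is the unique optimum. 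A minor case to check is $N=1$, where the formula gives $\varepsilon^t_{\text{th}}=2^{-\beta_t}h_t$ and hence $b_t=\beta_t$: all remaining bits are sent in slot $t$, which is why the truncation is taken at $\beta_t$.
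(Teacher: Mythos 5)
Your proposal is correct and follows the same route as the paper: reformulate the problem at TS $t$ with the expectation over $I$ absorbed into the effective gain $\overline{h_1}=h_1/p_t$, observe convexity, and solve the KKT conditions to obtain the water-filling form with the threshold fixed by the sum constraint over the active set. The paper omits the KKT details "due to space considerations," whereas you carry them out, including the self-consistency argument for determining $\bar{\textbf{H}}_t$; your derivation matches the stated result.
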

The operation $\vert\textbf{H}\vert$ denotes the number of elements in the set $\textbf{H}$ and $\mathbb{G} \left(\textbf{H}\right)$ denotes the geometric mean of the elements in this set. Notice that the set $\textbf{H}_t$ contains
the present channel state and the channel states of the future time slots. However, the set $\bar{\textbf{H}}_t$ is a subset of $\textbf{H}_t$ containing the channel states that are greater than threshold value $\varepsilon^t_{\text{th}}$.

\begin{proof}
At TS $t$, the optimization problem~\eqref{optimzation} can be formulated as the following optimization problem
\begin{equation} \label{eqn3}
\begin{aligned}
&\min_{b_t,\cdots,b_1}& &\sum_{i=2}^{t}\frac{2^{b_i}-1}{h_i}+ \left(\frac{2^{b_1}-1}{h_1}\right) p_t\\
&\ \text{subject to}& &\sum_{i=1}^{t} b_{i} \ = \ \beta_{t}\\
&& &b_{i}\ \geq \ 0 \quad \forall \ i\in \lbrace t,t-1,\cdots,1\rbrace.
\end{aligned}
\end{equation}
The objective function is the expected energy of transmission calculated at the TS $t$. The Hessian of the objective function of problem~\eqref{eqn3} is a diagonal matrix with non-negative diagonal elements, and hence, the optimization problem is a convex problem. The optimal solution can be obtained by solving the Karush-Kuhn Tucker (KKT) conditions yielding (\ref{eqn4}). The details are omitted due to space considerations.
\end{proof}

At TS $t$,  the scheduler solves the problem~\eqref{eqn3} to get the vector $\mathbf{b}=\left[b_t,\cdots,b_1\right]$. However, only the value of $b_t$ is utilized where the scheduler transmits $b_t$ bits at TS $t$. In the next time slot, $t-1,$ the scheduler resolves  problem~\eqref{eqn3} again after updating the probability of requesting the packet based on the current user location. Note that the $t$th TS is utilized for proactive transmission if and only if $h_t > \varepsilon^t_{\text{th}}$, where the channel state $h_1$ is replaced by $\overline{h_1}$ that reflects the effect of the accuracy of  prediction on the proactive scheduler. Decreasing $p_t$ leads to increasing $\overline{h_1}$ and $\varepsilon^t_{\text{th}}$ which reduces the number of bits transmitted proactively. Furthermore, increasing the PW size $T_p$ gives the source a better chance to select the favorable time slots that have $h_t > \varepsilon^t_{\text{th}}$ to transmit more parts of the packet proactively which decreases the total consumed energy.

\section{ONLINE SCHEDULING}
\label{online} In this section, we assume that the source has
causal channel state information, i.e., at the beginning of TS $t$
the source knows $h_t$, but future channel states $\lbrace h_i\rbrace^{t-1}_{i=1}$ are not known. Let $A_t$ denote the information available to the source at TS $t$, i.e.,
\begin{equation}
A_t=\left\{\begin{array}{ll}
\left(\beta_t,h_t,p_t\right) & t\in\mathcal{T}_p\\
\left(\beta_t,h_t,I\right) & t=1
\end{array}
\right. .
\end{equation}
We aim to get the proactive scheduling policy $\mathcal{G}^{*}=\left[b_{T_p+1}\left(A_{T_p+1}\right),\cdots,b_1\left(A_1\right)\right]$ which is a sequence of functions mapping the available information to the source at TS $t$ into a number of bits transmitted at this TS to minimize the cost function~\eqref{obj}, where the expectation with respect to the random indicator $I$ and the channel states $h_{T_p+1},\cdots,h_1$. The optimal policy $\mathcal{G}^{*}$ can be obtained by solving the problem~\eqref{optimzation} recursively using the standard dynamic programming algorithm
\begin{subequations}  \label{belman}
\begin{align}
J_1\left(\beta_1,I,h_1\right)&=E\left(\beta_1,h_1\right)I&\label{belman:1}\\
\label{belman:2}
J_2\left(\beta_2,p_2,h_2\right)&=\min_{0\leq b_2\leq \beta_2}  E\left(b_2,h_2\right)+\mathbb{E}_{I}\left\{\overline{J}_{1}\left(\beta_2-b_2,I\right)\right\}  &\\\label{belman:3}
J_t\left(\beta_t,p_t,h_t\right)&=\min_{0\leq b_t\leq \beta_t} E\left(b_t,h_t\right)+\overline{J}_{t-1}\left(\beta_t-b_t,p_t\right) 
\end{align}
\end{subequations}
where $\overline{J}_{t-1}\left(\beta,p\right)=\mathbb{E}_{h}\left\{J_{t-1}\left(\beta,p,h\right)\right\}$, represents the minimum expected energy to transmit $\beta$ bits through the remaining $t-1$ slots given $\text{Pr}\left(I=1\right)=p$. We take the expectation to the random indicator $I$ when applying the backward recursion in~\eqref{belman:2} since the source does not know the value of $I$ in the PW time slots. In~\eqref{belman:1}, the source transmits all the remaining bits at the last TS $t=1$ if and only if $I=1$ to complete the transmission of the packet before the end of the deadline. In~\eqref{belman:2}, and~\eqref{belman:3}, the source determines the optimal bit $b_t^{*}$ allocated to the TS $t\in\mathcal{T}_p$ that minimizes the current consumed energy plus the expected energy to transmit the remaining bits through the remaining slots, where the expectation is taken with respect to the future channel states and the random indicator $I$. The optimality of the algorithm is verified according to Bellman's equations~\cite{bertsekas1995dynamic}.

\subsection{Proactive scheduling policy for $T_p=1$}
\begin{p} The optimum number of bits allocated to the TS $2$ is given by
\begin{equation}
\begin{aligned}
b_2&=\left<\frac{B}{2}+\frac{1}{2}\log_2\left(h_2 \nu_1 p_2\right)\right>^{B}_{0}\\
&\text{where}\quad\nu_1=\mathbb{E}_h\left\{\frac{1}{h}\right\}.
\end{aligned}
\end{equation}
\end{p}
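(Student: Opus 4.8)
The plan is to specialize the dynamic programming recursion~\eqref{belman} to $T_p=1$, where only the slots $t=2$ and $t=1$ are present, and then solve the resulting one‑dimensional convex program in closed form, exactly as was done for the offline case in Proposition~\ref{Proposition_1}.

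First I would evaluate the terminal stage. From~\eqref{belman:1} we have $J_1\left(\beta_1,I,h_1\right)=\frac{2^{\beta_1}-1}{h_1}\,I$. Since the channel states are i.i.d.\ and independent of the request indicator $I$, I can average over $h_1$ alone to obtain $\overline{J}_1\left(\beta,I\right)=\mathbb{E}_{h}\left\{J_1\left(\beta,I,h\right)\right\}=\left(2^{\beta}-1\right)\nu_1\,I$, with $\nu_1=\mathbb{E}_h\{1/h\}$. Taking the further expectation over the Bernoulli indicator, with $\mathbb{E}\{I\}=p_2$, gives $\mathbb{E}_{I}\left\{\overline{J}_1\left(\beta_2-b_2,I\right)\right\}=\left(2^{\beta_2-b_2}-1\right)\nu_1 p_2$.

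Substituting this into~\eqref{belman:2} with $\beta_2=B$, the optimal number of proactively transmitted bits solves
\begin{equation}
\min_{0\leq b_2\leq B}\ \frac{2^{b_2}-1}{h_2}+\left(2^{B-b_2}-1\right)\nu_1 p_2 .
\end{equation}
The objective is a sum of two exponentials in $b_2$, hence strictly convex, so the unconstrained minimizer is found by setting the derivative to zero: $\frac{2^{b_2}}{h_2}=2^{B-b_2}\nu_1 p_2$, i.e.\ $b_2=\frac{B}{2}+\frac{1}{2}\log_2\left(h_2\nu_1 p_2\right)$. Projecting this point onto $[0,B]$ — which is precisely what the KKT conditions for the box constraint $0\leq b_2\leq B$ prescribe for a convex objective — yields the truncated expression $\left\langle\,\cdot\,\right\rangle_{0}^{B}$ claimed in the statement.

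The computation is largely routine; the two points that need care are (i) using the independence of $h_1$ and $I$ to split the nested expectation when forming $\overline{J}_1$, and (ii) verifying the box‑constraint multipliers so that the truncation $\left\langle\,\cdot\,\right\rangle_{0}^{B}$ is the genuine constrained optimum rather than a mere clipping heuristic. Neither is a real obstacle here, since strict convexity in the single variable $b_2$ makes the projection argument immediate; the only mild conceptual difference from the offline case is that the realized future channel $h_1$ is replaced by its harmonic‑type average through $\nu_1$, reflecting that the source must hedge over the unknown $h_1$.
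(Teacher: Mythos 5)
Your proposal is correct and follows essentially the same route as the paper: it specializes the Bellman recursion to obtain $\mathbb{E}_{I}\{\overline{J}_1(\beta,I)\}=(2^{\beta}-1)\nu_1 p_2$, reduces the problem to the single-variable convex program $\min_{0\le b_2\le B}\frac{2^{b_2}-1}{h_2}+(2^{B-b_2}-1)\nu_1 p_2$, and solves it by setting the derivative to zero with truncation to $[0,B]$. The only difference is that you spell out the intermediate expectation over $h_1$ and the justification of the truncation step, which the paper leaves implicit.
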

\begin{proof}
By applying Bellman's equations~\eqref{belman}, we have $\mathbb{E}_{I}\left\{\overline{J}_1\left(\beta,I\right)\right\}=\left(2^{\beta}-1\right)\nu_1 p_2$, and hence, the solution to the optimization problem at TS $2$ is given by
\begin{equation} \label{eqn10}
b_2=\arg \min_{0\leq b_2 \leq B}\ \frac{2^{b_2}-1}{h_2}+\left(2^{B-b_2}-1\right)\nu_1p_2
\end{equation}
Since the problem~\eqref{eqn10} is a convex problem, the optimal $b_2$ is obtained by equating the first derivative to zero.
\end{proof}

\subsection{Proactive scheduling policy for $T_p>1$}\label{online1}

There is no closed form expression for the optimal scheduling policy for $T_p>1$ since the closed form expression for the cost function $\overline{J}_t\left( \beta,p\right)$ defined in~\eqref{belman} does not exist. Therefore the optimal scheduler is obtained numerically by using the discretization method (see Section~$6.4$ in~\cite{bertsekas1995dynamic}). However, we can get two suboptimal solutions similar to~\cite{lee2009energy} whose performance is close to the optimal one.

\subsubsection{Certainty equivalent Scheduler (CES)}

We obtain the certainty equivalent scheduler (CES) by applying the following steps at each TS $t\in\mathcal{T}_p$: a) the source replaces  each uncertain variable at TS $t$ with its mean. Hence, we assume the channel inversion of the future time slots $\lbrace \frac{1}{h_i}=\nu_1\rbrace^{t-1}_{i=1}$ and $\mathbb{E}\left\{I\right\}=p_t$. b) the source determines $b_t$ by applying the offline scheduling described in Section~\ref{offtp} over the following channels inversion: $\frac{1}{h_t}$, $\lbrace \frac{1}{h_i}=\nu_1\rbrace^{t-1}_{i=1}$. Therefore, the CES is given by

\begin{equation}
\begin{aligned}
&b_t=\left<\frac{\beta_t}{t}+\frac{t-1}{t}\log_2\left(\frac{h_t}{\varepsilon^t_{\text{CES}}} \right)\right>^{\beta_t}_{0}\\
\text{where}\quad&\varepsilon^t_{\text{CES}}=\frac{1}{\nu_1 p^{\frac{1}{t-1}}_t}
\end{aligned}
\end{equation}

\subsubsection{Suboptimal II Scheduler}

The suboptimal II scheduler is obtained by relaxing the constraint $0\leq b_t\leq\beta_t$. Therefore, we can get an approximate closed-form expression for $\overline{J}_{t-1}\left(\beta,p\right)$ by using mathematical induction as
\begin{equation}
\begin{aligned}
\tilde{\overline{J}}_{t-1}\left(\beta,p\right)=&\left(t-1\right) 2^{\frac{\beta}{t-1}}\mathbb{G}\left(\nu_{t-1},\cdots,\nu_1\right)p^{\frac{1}{t-1}}\\
& -\left(t-2+p\right)\nu_1\\
\end{aligned}
\end{equation}
where $\nu_i=\mathbb{E}_h\left\{\frac{1}{h^{\frac{1}{i}}}\right\}^i$ and the operation $\mathbb{G}\left(\nu_{t},\cdots,\nu_1\right)$ denotes the geometric mean $\left(\prod^{t}_{i=1}\nu_i\right)^{\frac{1}{t}}$. Therefore, at the beginning of each TS $t\in\mathcal{T}_p$, the number of bits $\left(b_t\right)$ is determined by solving the following optimization problem
\begin{equation}\label{eqn13}
b_t=\arg \min_{b_t}\ \frac{2^{b_t}-1}{h_t}+\tilde{\overline{J}}_{t-1}\left(\beta_t-b_t\right).
\end{equation}

Note that problem~\eqref{eqn13} is a convex optimization problem. Therefore, we get the following suboptimal II scheduler by equating the first derivative to zero and using the truncation operator to maintain the constraint $0 \leq b_t \leq \beta_t$ which yields
\begin{equation}\label{eqn12}
\begin{aligned}
& b_t=\left<\frac{\beta_t}{t}+\frac{t-1}{t}\log_2\left(\frac{h_t}{\varepsilon^t_{\text{SubII}}}\right)\right>^{\beta_t}_{0}\\
\text{where}\quad&\varepsilon^t_{\text{SubII}}=\frac{1}{\mathbb{G}\left(\nu_{t-1},\cdots,\nu_1\right)p^{\frac{1}{t-1}}_t}.
\end{aligned}
\end{equation}
\section{NUMERICAL RESULTS}
\label{results}

In this section, we compare the performance of predictive networks and reactive networks to illustrate the gain that can be achieved by using the proposed proactive scheduling policies. We run $10^3$ Monte-Carlo simulations to evaluate the expected transmission energy of predictive and reactive networks. We assume that the user can exist in three locations, i.e., $k=3$. In each simulation, the elements of the transition matrix $\textbf{L}$ are generated with uniform distribution and each row is normalized. In addition, the elements of the vector $\textbf{g}$ are randomly generated according to a uniform distribution. At each simulation, the channel states $\lbrace h_t\rbrace^{T_p+1}_{t=1}$ are randomly generated according to a truncated exponential distribution with parameter $\lambda=1$ and threshold $t_o=0.001$. Then, we generate a sample path of the user location according to the stationary Markov chain described in Section~\ref{estimate}, where the initial location of the user $X\left(T_p+1\right)$ is generated according to the steady-state probability distribution of the Markov chain. Finally, the request initiated at TS $t=1$ is generated according to the user location $X\left(1\right)$ and the statistics vector $\textbf{g}$. For example, if $X(1)=l_2$, the indicator $I$ is generated as a Bernoulli random variable with parameter $g_2$.

Numerical results for the offline scheduling algorithm described in Section~\ref{offtp} are shown in Fig.~\ref{fig3} that displays the expected transmission energy versus the packet size for different values of $T_p$ ($T_p=0$ represents the reactive case). Fig~\ref{fig3} shows the gap between the expected energy of reactive and predictive networks. It is clear that increasing the PW size $T_p$ leads to decreasing the expected transmission energy.

\begin{figure}
\centering
\includegraphics[width=8 cm,height=6 cm]{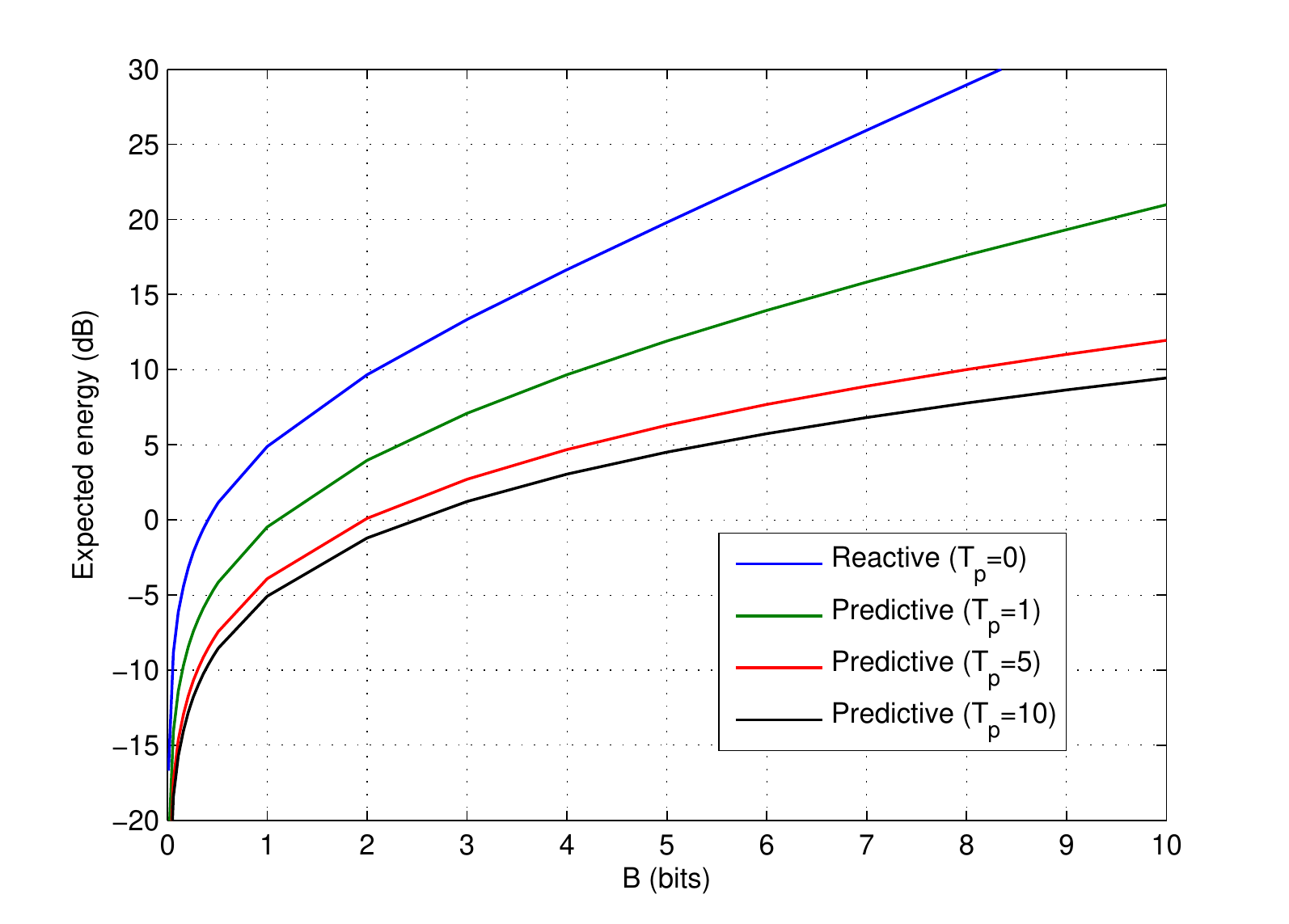}
\caption{Comparison between proactive scheduling schemes with different PW size and the reactive scheme.}
\label{fig3}
\end{figure}

In Fig~\ref{fig4}, we investigate the performance of the online scheduling schemes with PW size $T_p=4$ versus reactive networks. It is shown that the gap between the proposed algorithms and the reactive case increases when the packet size increases. We can also see that when the packet size increases, the suboptimal solutions converge to the optimal one obtained by using the discretization method. However, Suboptimal II performs better than CES since the suboptimal II is obtained by relaxing the constraint $0\leq b_t\leq \beta_t$ that might be satisfied for large packet size.

\begin{figure}
\centering
\includegraphics[width=8 cm,height= 6 cm]{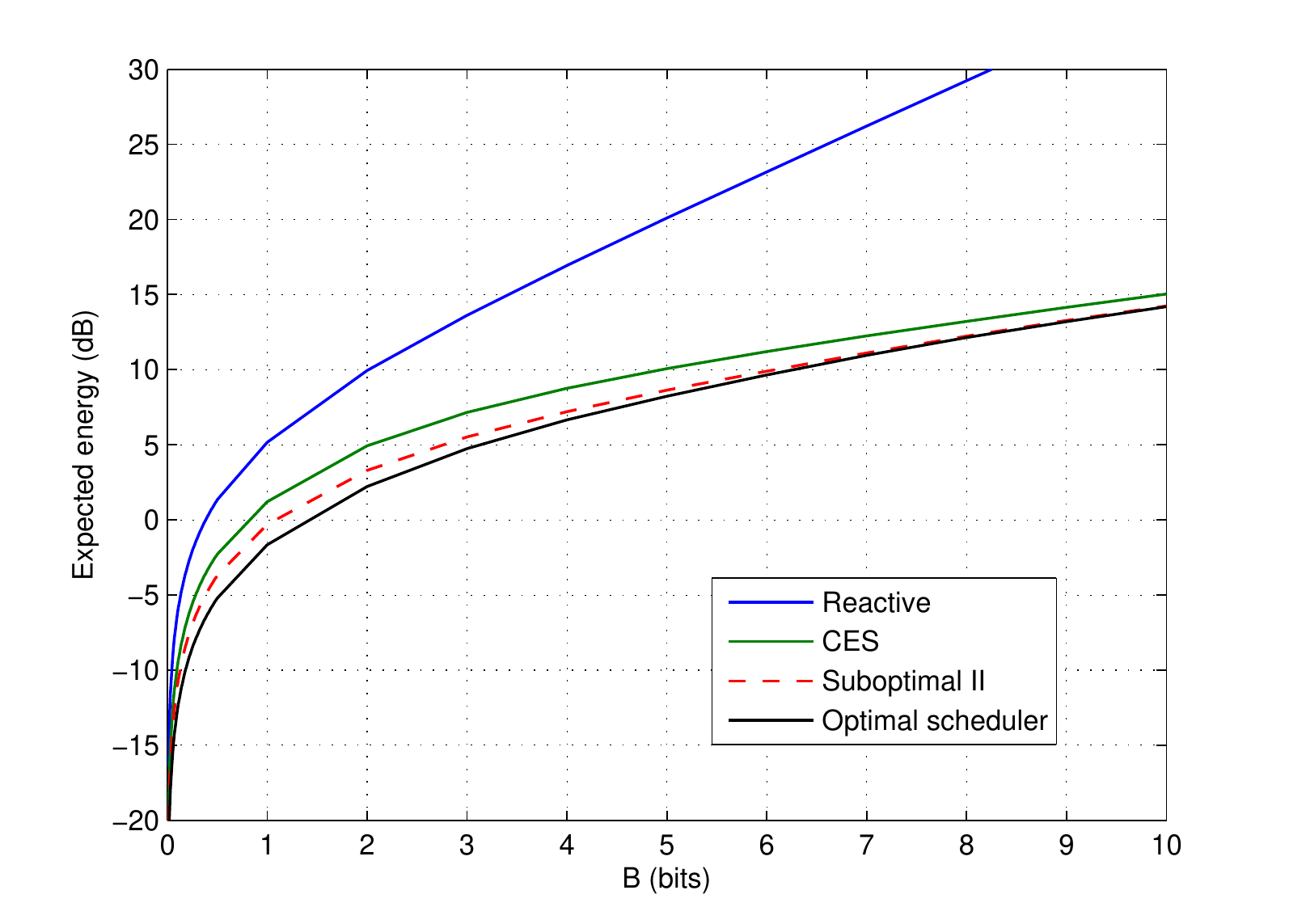}
\caption{Comparison of online scheduling schemes with PW size $T_p=4$ and reactive (traditional) scheme.}
\label{fig4}
\end{figure}
In Fig~\ref{fig2}, we display the amount of energy that can be saved by using our proactive scheduling strategy when PW size $T_p=1$ described in Section~\ref{offtp1}. The expected saved energy is defined by the difference between the expected transmission energy of traditional (reactive) and predictive
networks in dB units. The following values of $p_2$ that denote the probability of the request arrival at TS $t=1$, are simulated $p_2=\lbrace0.1,0.5,1\rbrace$ to show the impact of the prediction accuracy on the expected saved energy. The results show that proactive scheduling can achieve gain for each value of $p_2$, where the expected saved energy increases as the packet size $B$ increases.

\section{CONCLUSION}
\label{conclusion}

In this paper, we have proposed proactive scheduling schemes for delay-constrained traffic under two different assumptions on the channel state information at the source. Our schemes are based on predicting the request arrival $T_p$ slots ahead so that the source can optimize the transmission energy in each TS of the PW  with the objective of minimizing the expected transmission energy. Throughout this paper, we have assumed that prediction is not perfect and the effects of prediction errors on the proactive energy allocation were taken into account. Numerical results have been provided to demonstrate the superiority of the proactive scheduling strategies, where predictive networks achieve significant reduction in the transmission energy compared to reactive ones.

\appendix\label{Appendix_1}
The optimum scheduler is the solution to the problem~\eqref{optimzation} by taking the expectation with respect to the indicator $I$. Thus, the problem~\eqref{optimzation} can be reformulated as follows
\begin{equation} \label{l1}
\begin{aligned}
&\arg\min_{0\leq b_2\leq B}& &\frac{2^{b_2}-1}{h_2}+\left(\frac{2^{B-b_2}-1}{h_1}\right)p_2
\end{aligned}
\end{equation}
We define $f\left(b_2\right)$ as the objective function in~\eqref{l1}
\begin{equation}
f\left(b_2\right)=\frac{2^{b_2}-1}{h_2}+\left(\frac{2^{B-b_2}-1}{h_1}\right)p_2
\end{equation}
The first and second derivatives of $f\left(b_2\right)$ are given
by
\begin{eqnarray}
\frac{d f}{d b_2}&\!\!=\!\!&
\frac{2^{b_2}}{h_2}\ln\left(2\right)-\frac{2^{B-b_2}}{h_1}p_2\ln\left(2\right)\label{d1}\\
 \frac{d^2 f}{d
b^{2}_{2}}&\!\!=\!\!&\frac{2^{b_2}}{h_2}\left(\ln\left(2\right)\right)^2+\frac{2^{B-b_2}}{h_1}p_2\left(\ln\left(2\right)\right)^2
\label{d2}
\end{eqnarray}
where $\frac{d^2 f}{d b^{2}_{2}} \geq 0$ $\forall\ b_2 \geq 0$,
i.e., the optimization problem~\eqref{l1} is a convex problem.
Thus the optimum solution $b_2$ is obtained by setting the
first derivative to zero in~\eqref{d1} where the constraint $0\leq b_2\leq B$ is not violated by using the truncation
operator.

\begin{figure} [t]
\centering
\includegraphics[width=8 cm,height= 6 cm]{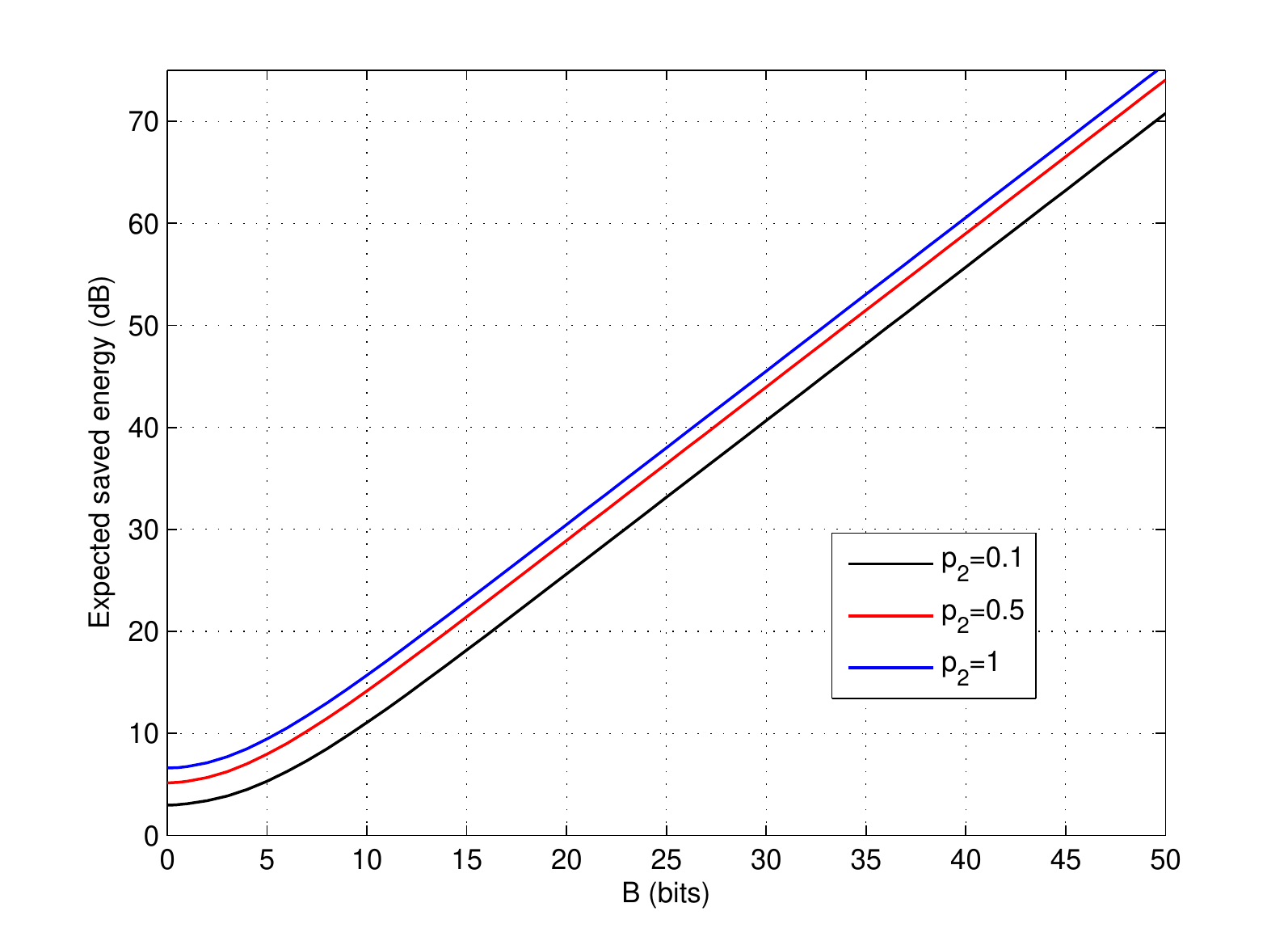}
\caption{Expected saved energy in dB for $T_p=1$.} \label{fig2}
\end{figure}

\nocite{*}
\bibliographystyle{IEEEtran}
\bibliography{IEEEabrv,mybibfile}

\end{document}